\newtheorem{proposition}{Proposition}
\newtheorem{theorem}{Theorem}
\newtheorem{hypo}{Hypothesis}
\newtheorem{remark}{Remark}
\begin{document}
%
\title{Coordination in Network Security Games:\\
a Monotone Comparative Statics Approach}

\author{\IEEEauthorblockN{Marc Lelarge}
\IEEEauthorblockA{INRIA - ENS\\
Paris, France\\
Email: marc.lelarge@ens.fr}}


%


\maketitle

\begin{abstract}
Malicious softwares or malwares for short have become a major security
threat. While originating in criminal behavior, their impact are also
influenced by the decisions of legitimate end users. Getting agents in
the Internet, and in networks in general, to invest in and deploy
security features and protocols is a challenge, in particular because
of economic reasons arising from the presence of network externalities.

In this paper, we focus on the question of incentive alignment for agents of a large
network towards a better security.
We start with an economic model for a single agent, that
determines the optimal amount to invest in protection. The model takes
into account the vulnerability of the agent to a security breach and
the potential loss if a security breach occurs. We derive conditions on the quality of the protection to ensure that the
optimal amount spent on security is an increasing function of the
agent's vulnerability and potential loss. We also show that for a
large class of risks, only a small fraction of the expected loss
should be invested.

Building on these results, we study a network of interconnected
agents subject to epidemic risks.
We derive conditions to ensure that the incentives of all agents are
aligned towards a better security. 
When agents are strategic, we show that security investments are always socially
inefficient due to the network externalities. Moreover alignment of
incentives typically implies a coordination problem, leading to an equilibrium with a very high price of
anarchy.\footnote{extended abstract of this work presented at INFOCOM
  2012. This version corrects some inaccuracies
  of \cite{lelinfocom12}. The author wishes to thank the anonymous reviewers for valuable comments.}
\end{abstract}


%

\section{Introduction}

Negligent users who do not protect their computer by regularly
updating their antivirus software and operating system are clearly
putting their own computers at risk. But such users, by connecting to
the network a computer which may become a host from which viruses can
spread, also put (a potentially large number of) computers on the
network at risk \cite{a01,at08}.
This describes a common situation in the Internet and in enterprise networks, in which
users and computers on the network face {\em epidemic risks}.
Epidemic risks are risks which depend on the behavior of
other entities in the network, such as whether or not those entities invest in
security solutions to minimize their likelihood of being infected.
\cite{ocde} is a recent OECD survey of the misaligned incentives as perceived by multiple stake-holders.
Our goal in this paper is to study conditions for alignment of
incentives for agents of a large
network subject to epidemic risks and its implications for the equilibria.

Our work allows a better understanding of economic network
effects: there is a {\em total effect} if one agent's adoption of a protection
benefits other adopters and there is a {\em marginal effect} if it increases others'
incentives to adopt it (we refer to Section 3 of \cite{fk07} for a
comprehensive survey about network effects).
In information security economics, the presence of the total effect has been
the focus of various recent works starting with Varian's work
\cite{var02}. When an agent protects itself, it benefits not only to those who
are protected but to the whole network.
Indeed there is also an incentive to free-ride the total effect. 
Those who invest in self-protection incur some cost and in
return receive some individual benefit through the reduced individual
expected loss. But part of the benefit is public: the reduced indirect
risk in the economy from which everybody else benefits.
As a result, the agents invest
too little in self-protection relative to the socially efficient
level. The efficiency loss (referred to as the price of anarchy) has been
quantified in various game-theoretic models
\cite{kunreuther03,sig08,gcc08,law08,bambos,oovm}.

In this paper, we focus on the marginal effect and its relation to the
coordination problem (see Section 3.4 in \cite{fk07}).
To understand the mechanism of incentives regarding security in a
large network, we need to analyze how an increase in the total
population adopting security will impact one agent's incentive to
adopt it. 
To do so, we use a monotone comparative statics approach and start with an economic model for a single agent that determines the
optimal amount to invest in protection. We follow the approach
proposed by Gordon and Loeb in \cite{gl02}. They found that the
optimal expenditures for protection of an agent do not
always increase with increases in the vulnerability of the agent.
Crucial to their analysis is the security breach probability function
which relates the security investment and the vulnerability of the
agent with the probability of a security breach after protection.
This function can be seen as a proxy for the quality of the security
protection. Our first main result (Theorem \ref{th:monot1}) gives
sufficient conditions on this function to ensure that the optimal
expenditures for protection always increase with increases in the
vulnerability of the agent (this sensitivity analysis is called \emph{monotone comparative statics} in economics). From an economic perspective, these
conditions will ensure that all agents with sufficiently large
vulnerability value the protection enough to invest in it.
We also extend a result of \cite{gl02} and show (Theorem \ref{th:1/e})
that if the security breach probability function is log-convex in the
investment, then a {\em risk-neutral}\footnote{i.e an agent indifferent to investments
that have the same expected value: such an agent will have no
preference between i) a bet of either 100\$ or nothing, both with a
probability of 50\% and ii) receiving 50\$ with certainty} agent never
invests more than 37\% of the expected loss.

Building on these results, we study a network of interconnected agents
subject to epidemic risks. We model the effect of the network through
a parameter $\gamma$ describing the information available to the agent and
capturing the security state of the network. We show that our
general framework extends previous work \cite{sig08,lel09} and allows to
consider a security breach probability function depending on the
parameter $\gamma$ and possibly other private informations on the
vulnerability of the asset.
Our third main result (Theorem \ref{th:gen}) gives sufficient
conditions on this function to ensure that the optimal protection
investment always increases with an increase in the security state of
the network.

This property will be crucial in our last analysis:
we use our model of interconnected agent in a game theoretic
setting where agents anticipate the effect of their actions on the
security level of the network.
We diverge form most of the literature on security games (some
exceptions are \cite{jgcc10,sig08,leboudec}) and relax the complete
information assumption, i.e. each player's security breach probability
is not common knowledge but instead a private information. 
In our model only global statistics are publicly available and agents do not disclose any
information concerning their own security strategy.

We show how the monotonicities (or the lack of
monotonicities) impact the equilibrium of the security game.
In particular, alignment of incentives typically implies a
coordination problem, leading to an equilibrium with a very high price
of anarchy.
Moreover, we distinguish two parts in the network
externalities that we call public and private.
Both types of externalities are positive since any additional agent investing in security will increase the security level of the whole network. However, the effect of this additional agent will be different for an agent who did not invest in security from an agent who already did invest in security.
The public externalities correspond to the network effect on insecure
agents while the private externalities correspond to the network
effect on secure agents (also called total effect in the economics literature \cite{fk07}).

As a result of this separation of externalities, some surprising phenomena can occur: 
also both externalities are positive, there are situations where the incentive to invest in
protection decreases as the fraction of the population investing
in protection increases. This is
an example where the total effect holds but the marginal effect
fails (which is essentially a case where Segal's increasing
externalities \cite{segal} or Topkis'supermodularity \cite{topbook}
fails).
We also show that in the security
game, security investments are always inefficient due to the network
externalities. This raises the question whether economic tools like
insurance \cite{blinfocom08,blweis08,lbinfocom09} could be used to
lower the social inefficiency of the game\footnote{Note that in this case the risk-neutral assumption made in this paper should be replaced by a risk-adverse assumption.}?

The rest of the paper is organized as follows. In Section 
\ref{sec:gl}, the optimal security investment for a single agent is
analyzed. In Section \ref{sec:inter}, we extend it to an
interconnected agent and show it connects with the epidemic risk
model. Finally in Section \ref{sec:equ}, we consider the case where
agents are strategic. We introduce the notion of fulfilled
expectations equilibrium and show our main game theoretic results.

\section{Optimal security investment for a single agent}\label{sec:gl}

In this section, we present a simple one-period model of an agent
contemplating the provision of additional security to protect a given
information set introduced by Gordon and Loeb in \cite{gl02}. In
one-period economic models, all decisions and outcomes occur in a
simultaneous instant. Thus dynamic aspects are not considered.

\subsection{Economic model of Gordon and Loeb}

The model is characterized by two parameters $\ell$ and $v$ (also
Gordon and Loeb used a bit more involved notation).
The parameter $\ell$ represents the monetary loss caused by a security
breach. The parameter $\ell\in \mathbb{R}_+$ is a positive real number.
The parameter $v$ represents the probability that without additional
security, a threat results in the information set being breached and
the loss $\ell$ occurs. The parameter $v$ is called the \emph{vulnerability} of the asset.
Being a probability, it belongs to the interval $[0,1]$.

An agent can invest a certain amount $x$ to reduce the probability of
loss to $p(x,v)$. We make the assumptions $p(0,v) = v$ and
since $p(x,v)$ is a probability we assume that for all $x>0$ and
$v\in [0,1]$ we have $0\leq p(x,v)\leq v$. The function $p(x,v)$ is called the \emph{security breach probability}.

The expected loss for an amount $x$ spent on security is given by $\ell p(x,v)$.
Hence if the agent is risk neutral, the optimal security investment should
be the value $x^*$ minimizing
\begin{eqnarray}
\label{eq:gl}\min\left\{\ell p(x,v)+x :x\geq 0\right\}.
\end{eqnarray}

We define the set of optimal security investment by
$\varphi(v,\ell) = \arg \min\left\{ \ell p(x,v)+x: x\geq 0 \right\}$.
Clearly in general the function $\varphi$ is set-valued and we will
deal with this fact in the sequel. For now on, assume that the
function $\varphi$ is real-valued, i.e. sets reduce to singleton.
As noticed in \cite{gl02}, it turns out that the function $\varphi(v,\ell)$
does not need to be non-decreasing in $(v,\ell)$ for general functions
$p(x,v)$. An example given in \cite{gl02} is $p_{GL}(x,v)= v^{\alpha x+1}$,
where the parameter $\alpha>0$ is a measure of the productivity of
information security.
This class of security breach probability functions has the property
that the cost of protecting highly vulnerable information sets becomes
extremely expensive as the vulnerability of the information set
becomes very close to one. This is not the only class of security
breach functions with this property. Their simplicity allows to gain
further insights into the relationship between vulnerability and
optimal security investment.

Indeed, an interior minimum $x^*>0$ is characterized by
the first-order condition:
\begin{eqnarray}
\label{eq:1od}\ell \frac{\partial p}{\partial x}(x^*,v)=-1.
\end{eqnarray}
In the particular case where $p_{GL}(x,v)= v^{\alpha
  x+1}$, we obtain
$\frac{\partial p_{GL}}{\partial x}(x,v)=(\alpha\log v )v^{\alpha x+1}$.
So that solving Equation (\ref{eq:1od}), we get
$\varphi_{GL}(v,\ell) = \frac{-\log\left( -\ell\alpha\log v\right)}{\alpha\log v} -\frac{1}{\alpha}$.

\begin{figure}[htb]
\begin{center}
\includegraphics[angle=0,width=4cm]{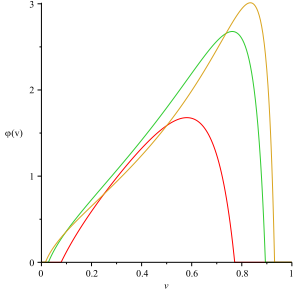}
\hspace{25pt} \caption{Function $\varphi_{GL}(v,\ell)$ as a function of the
  vulnerability $v$ and with parameters: $\ell=10$
and $\alpha=0.5,1,1.5$ (red, green, brown)} \label{fig:gl}
\end{center}\end{figure}

Figure \ref{fig:gl} shows the optimal security investment for various
values of $\alpha$ and $\ell$ as a function of the vulnerability
$v$. In particular, we see that the optimal investment is zero for low
values of the vulnerability and also for high values of the
vulnerability. In other words, in this case, the marginal benefit from
investment in security for low vulnerability information sets does not
justify the investment since the security of the information set is
already good. However if the information set is extremely vulnerable,
the cost of security is too high to be 'profitable', in the sense that
there is no benefit in protecting it. 

\subsection{Sufficient conditions for monotone investment}

In this section, we derive sufficient conditions on the probability
loss in order to avoid the non-monotonicity in the vulnerability of
the information set. In such a case, the information security
decision is simple since there is an augmenting return of investment
with vulnerability: the security manager needs to adjust the security
investment to the vulnerability. Also the security provider should
set the price of its solution so as to remain in a region where such
monotonicity is valid.

First we need to define the monotonicity of a set-valued function. We
say that the set-valued function $f:\mathbb{R}^n\to 2^{\mathbb{R}}$ is
non-decreasing if for any $x^L,x^H\in \mathbb{R}^n$ with $x^L\leq x^H$ (for the product order), we have for any $y^L\in f(x^L)$ and any $y^H\in f(x^H)$: $y^L\leq y^H$.

We start with a particular case (its proof will follow from our more
general result and is given at the end of this section):
\begin{proposition}\label{prop:deriv}
Assume that the function $p(x,v)$ is twice continuously
differentiable on $\mathbb{R}_+\times [0,1]$. If
\begin{eqnarray}
\label{cond} \frac{\partial p}{\partial x}(x,v) \leq 0 , &\mbox{ and, }&
\frac{\partial^2 p}{\partial x\partial v} (x,v)\leq 0
\end{eqnarray}
then the function $(v,\ell) \mapsto \varphi(v,\ell)$ is non-decreasing in $(v,\ell)$.
\end{proposition}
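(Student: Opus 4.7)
The plan is to recognize Proposition 1 as a standard monotone comparative statics result and derive it via Topkis' theorem applied to the objective
\[
g(x;v,\ell) \;=\; \ell\, p(x,v) + x, \qquad x \ge 0,
\]
whose set of minimizers is precisely $\varphi(v,\ell)$. Topkis tells us that if $g$ has \emph{decreasing differences} in $(x;v,\ell)$, i.e.\ if for every $x^L \le x^H$ the quantity $g(x^H;v,\ell) - g(x^L;v,\ell)$ is non-increasing in each of $v$ and $\ell$, then $\arg\min_x g(x;v,\ell)$ is non-decreasing in $(v,\ell)$ in the strong set order; under the paper's pointwise definition of set-valued monotonicity this is exactly the conclusion sought.

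Since $p$ is twice continuously differentiable, the decreasing-differences property reduces to sign conditions on the cross partials of $g$. First I would compute
\[
\frac{\partial^2 g}{\partial x\,\partial v}(x;v,\ell) \;=\; \ell\,\frac{\partial^2 p}{\partial x\,\partial v}(x,v) \;\le\; 0,
\]
using $\ell \ge 0$ and hypothesis (\ref{cond}); and
\[
\frac{\partial^2 g}{\partial x\,\partial \ell}(x;v,\ell) \;=\; \frac{\partial p}{\partial x}(x,v) \;\le\; 0,
\]
again by hypothesis (\ref{cond}). Integrating these inequalities along straight-line paths between arbitrary $x^L \le x^H$ and $(v^L,\ell^L) \le (v^H,\ell^H)$ yields the required decreasing-differences inequality on $g$.

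The final step is to translate Topkis' conclusion into the pointwise form adopted in the excerpt: given $(v^L,\ell^L) \le (v^H,\ell^H)$ and any $x^L \in \varphi(v^L,\ell^L)$, $x^H \in \varphi(v^H,\ell^H)$, one shows $x^L \le x^H$ by a standard two-point swap argument. Suppose for contradiction $x^L > x^H$; then the optimality of $x^L$ at $(v^L,\ell^L)$ and of $x^H$ at $(v^H,\ell^H)$ gives
\[
g(x^L;v^L,\ell^L) \le g(x^H;v^L,\ell^L), \qquad g(x^H;v^H,\ell^H) \le g(x^L;v^H,\ell^H),
\]
which combined with the decreasing-differences inequality applied to the pair $x^H < x^L$ forces equality throughout, so that both $x^L$ and $x^H$ are simultaneously minimizers at both parameter values, yielding the monotone selection. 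The main subtlety I expect is precisely this last step: the paper's definition of monotone set-valued function is stronger than the strong set order that Topkis directly delivers, so some care is needed to argue that, with weak cross-partial inequalities, every selection (not merely the extremal selections) is compatible with the required ordering. Handling corner solutions $x^*=0$ needs no extra work since Topkis' framework accommodates them automatically.
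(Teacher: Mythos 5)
Your overall route --- reduce the claim to a Topkis-type monotone comparative statics statement by checking the signs of the cross partials of $g(x;v,\ell)=\ell\,p(x,v)+x$ --- is exactly the paper's: Proposition~\ref{prop:deriv} is derived there from Theorem~\ref{th:monot1} by observing that $\frac{\partial^2 g}{\partial x\partial v}=\ell\,\frac{\partial^2 p}{\partial x\partial v}\le 0$ and $\frac{\partial^2 g}{\partial x\partial \ell}=\frac{\partial p}{\partial x}\le 0$ give submodularity of $g$ in $(x,v)$ and in $(x,\ell)$. So your first two steps match the paper's argument.

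The gap is in your last step, and you have in fact put your finger on it without closing it. Under the paper's definition of a non-decreasing set-valued map (for $(v^L,\ell^L)\le(v^H,\ell^H)$, \emph{every} $y^L\in\varphi(v^L,\ell^L)$ must lie below \emph{every} $y^H\in\varphi(v^H,\ell^H)$), weak decreasing differences are not enough, and your two-point swap does not produce a contradiction: assuming $x^L>x^H$, the two optimality inequalities together with the weak decreasing-differences inequality only force $g(x^L;v^H,\ell^H)=g(x^H;v^H,\ell^H)$ and $g(x^L;v^L,\ell^L)=g(x^H;v^L,\ell^L)$, i.e.\ both points are minimizers at both parameter values. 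That is perfectly consistent with $x^L>x^H$, so ``forces equality throughout'' does not ``yield the monotone selection''; it yields only the strong set order. Indeed the conclusion in the paper's pointwise sense can genuinely fail under weak inequalities alone: if $p(x,v)$ does not depend on $v$ (both cross-partial conditions hold with equality or weakly) and $\arg\min_x\{\ell p(x)+x\}=\{a,b\}$ with $a<b$, then for $v^L<v^H$ one may pick $y^L=b$ and $y^H=a$, violating the required order. This is precisely why Theorem~\ref{th:monot1} assumes \emph{strict} submodularity: strictly increasing differences turn the swap argument into the strict chain $0\le f(x\vee y,v',\ell')-f(y,v',\ell')<f(x,v,\ell)-f(x\wedge y,v,\ell)\le 0$, an actual contradiction. (To be fair, the paper's own one-line deduction of Proposition~\ref{prop:deriv} from Theorem~\ref{th:monot1} is also loose on this weak-versus-strict point; but a complete proof needs either strictness in one of the cross-partial hypotheses or an explicit restriction to singleton argmin sets, and your write-up supplies neither.)
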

\begin{remark}
The first condition requires that the function
$p(x,v)$ is non-increasing in $x$, i.e. the probability of a security
break is lowered when more investment in security is done.
In the particular case of $p_{GL}$ described above, we have
$\frac{\partial^2 p_{GL}}{\partial x \partial v}(x,v) = \alpha v^{\alpha
x}\left(1+\alpha(\alpha x+1)\log v\right)$.
In particular $\frac{\partial^2 p_{GL}}{\partial x \partial v}(x,1) =
\alpha>0$ and we see that the function $p_{GL}$ does not satisfy the
conditions of the proposition which is in agreement with the fact that
the associated function $\varphi_{GL}$ is not monotone in $v$.
\end{remark}

It turns out that we often need to deal with cases where the choice
sets are discrete. In reality, discrete investments in
new security technologies are often more natural, resulting in
discontinuities. For example the amount $x$ could live in a space
$X\subset \mathbb{R}_+$ having empty interiors. In these cases,
Proposition \ref{prop:deriv} is useless. In order to extend it, we
introduce the notion of general submodular functions (see Topkis \cite{to78}). 
We first define the two operators $\wedge$ and $\vee$ in
$\mathbb{R}^n$:
\begin{eqnarray*}
x \wedge y &=& \sup\{ t\in \mathbb{R}^n,\: t\leq x;\: t\leq
y\}\mbox{and,}\\
x \vee y &=& \inf\{ t\in \mathbb{R}^n,\: t\geq x;\: t\geq y\}.
\end{eqnarray*}
A set $S\subset \mathbb{R}^n$ is a lattice if for any $x$ and $y$ in
$S$, the elements $x\wedge y$ and $x\vee y$ are also in $S$.
A real valued function $f$ on a lattice $S$ is submodular if for all
$x$ and $y$ in $S$, 
$f(x\wedge y) + f(x\vee y) \leq f(x)+f(y)$.
$f$ is strictly submodular on $S$ if the inequality is strict for all
pairs $x,y$ in $S$ which cannot be compared with respect to $\geq $,
i.e such that neither $x\geq y$ nor $y\geq x$ holds.

We are now ready to state our main first result which is an adaptation of Theorem 6.1 in \cite{to78}:
\begin{theorem}\label{th:monot1}
Let $S=[0,1]\times \mathbb{R}_+$.
If the function $f:X\times S \to \mathbb{R}$ 
is strictly submodular in the variables
$x$ and $v$ in $X\times [0,1]$ for any fixed $\ell$ and in the
variables $x$ and $\ell$ in $X\times \mathbb{R}_+$ for any fixed $v$, 
then $\varphi(v,\ell)=\arg\min\{ f(x,v,\ell):\:x\in X\}$ is non-decreasing.
\end{theorem}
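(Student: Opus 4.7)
The plan is to adapt Topkis' classical monotone-comparative-statics argument by contradiction. Take $(v^L,\ell^L)\le (v^H,\ell^H)$ in $S$ and pick minimizers $x^L\in\varphi(v^L,\ell^L)$, $x^H\in\varphi(v^H,\ell^H)$; the goal is to establish $x^L\le x^H$. Since $X\subset\mathbb{R}_+$ is totally ordered, the only alternative is $x^H<x^L$, which I will rule out. Rather than attacking the joint statement directly, I would prove the conclusion separately in each of the two variables and then chain through the intermediate point $(v^H,\ell^L)$, namely
\begin{equation*}
\varphi(v^L,\ell^L)\le \varphi(v^H,\ell^L)\le \varphi(v^H,\ell^H),
\end{equation*}
where each inequality uses exactly one of the two strict submodularity hypotheses.

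Each single-variable step is the textbook Topkis contradiction. For the first step, fix $\ell=\ell^L$, assume $v^L<v^H$, and pick $\tilde{x}\in\varphi(v^H,\ell^L)$; suppose for contradiction that $\tilde{x}<x^L$. Then the two points $(x^L,v^L)$ and $(\tilde{x},v^H)$ of $X\times[0,1]$ are non-comparable in the product order (neither dominates the other in both coordinates), and their meet and join are $(\tilde{x},v^L)$ and $(x^L,v^H)$ respectively. Strict submodularity of $f(\cdot,\cdot,\ell^L)$ in $(x,v)$ therefore gives
\begin{equation*}
f(\tilde{x},v^L,\ell^L) + f(x^L,v^H,\ell^L) < f(x^L,v^L,\ell^L) + f(\tilde{x},v^H,\ell^L).
\end{equation*}
On the other hand, the optimality of $x^L$ at $(v^L,\ell^L)$ and of $\tilde{x}$ at $(v^H,\ell^L)$ yields the weak inequalities $f(x^L,v^L,\ell^L)\le f(\tilde{x},v^L,\ell^L)$ and $f(\tilde{x},v^H,\ell^L)\le f(x^L,v^H,\ell^L)$, whose sum is exactly the reverse of the strict inequality above — a contradiction. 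Hence $x^L\le\tilde{x}$. The second step, $\tilde{x}\le x^H$, is obtained by the identical argument run with $v=v^H$ fixed and $\ell^L\le\ell^H$ varying, invoking this time the strict submodularity of $f(\cdot,v^H,\cdot)$ in $(x,\ell)$.

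The main obstacle I foresee is not in the algebra but in the case analysis at the boundary where either $v^L=v^H$ or $\ell^L=\ell^H$: in such degenerate cases the corresponding pair of test points becomes comparable, so strict submodularity gives no useful information. These cases must be peeled off first, but they are harmless because the chaining step reduces to a single non-trivial inequality that uses only the relevant one of the two submodularity hypotheses. A secondary technical point is that the single-variable conclusions must be established for \emph{every} intermediate $\tilde{x}\in\varphi(v^H,\ell^L)$ so the chain propagates for all choices of $x^L$ and $x^H$; this is automatic from the form of the contradiction, which treats $\tilde{x}$ as an arbitrary minimizer.
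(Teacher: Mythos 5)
Your argument is correct, but it is organized differently from the paper's. You run the Topkis contradiction twice, once in each variable, and chain the resulting one-dimensional monotonicity statements through the intermediate parameter point $(v^H,\ell^L)$, so that each step invokes exactly one of the two strict submodularity hypotheses. The paper instead chains at the level of the \emph{objective}: it first combines the two submodularity hypotheses (through the intermediate point $(v,\ell')$ in the function values, not in the argmins) to show that $f$ has strictly increasing differences jointly in $(x,(v,\ell))$, i.e.\ $f(x,v,\ell)-f(x,v',\ell')$ is strictly increasing in $x$ whenever $(v',\ell')>(v,\ell)$, and then runs a single contradiction between the two actual minimizers $x\in\varphi(v,\ell)$ and $y\in\varphi(v',\ell')$. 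The paper's route has two advantages. First, it isolates the increasing-differences property as the ``right'' sufficient condition (recorded in Remark \ref{rem:sub} and reused verbatim in the proof of Theorem \ref{th:gen}), whereas your decomposition does not produce that reusable intermediate statement. Second, and more substantively, your chain requires $\varphi(v^H,\ell^L)$ to be \emph{nonempty}: if the minimum happens to be attained at $(v^L,\ell^L)$ and $(v^H,\ell^H)$ but not at the intermediate point, your argument has nothing to chain through, while the paper's single-step contradiction never touches that point. This is a mild gap relative to the theorem as stated (which makes no attainment hypothesis at intermediate parameters); it disappears under any standing compactness or coercivity assumption, and your handling of the degenerate cases $v^L=v^H$ or $\ell^L=\ell^H$ is otherwise careful and correct.
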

\begin{remark}
Note that this Theorem does not require to take $f(x,v,\ell)= \ell
p(x,v)+x$. In particular it can also be applied to the case of
risk-adverse agents in which case $f$ depends on the (concave) expected utility
function of the agent. 
\end{remark}
\begin{proof}
If $x\leq x'$ and $x\neq x'$, then $x<x'$ is written.
By the definition of strict submodularity, we see that we have for $x'>x$ and
$(v',\ell')>(v,\ell)$:
\begin{eqnarray*}
f(x',v',\ell')+f(x,v,\ell')&<& f(x',v,\ell')+f(x,v',\ell')\\
f(x',v,\ell')+f(x,v,\ell)&<& f(x',v,\ell)+f(x,v,\ell'),
\end{eqnarray*}
so that we get
\begin{eqnarray*}
f(x',v',\ell')+f(x,v,\ell)&<&f(x',v,\ell)+f(x,v',\ell').
\end{eqnarray*}
This shows that $f$ has strictly increasing
differences in $(x,v,\ell)$, i.e. $f(x,v,\ell)-f(x,v',\ell')$ is
strictly increasing in $x$ for all $(v',\ell')>(v,\ell)$.

Consider $(v',\ell')>(v,\ell)$ and we now show that $y\geq x$ for $y\in
\varphi(v',\ell')$ and $x\in \varphi(v,\ell)$. Suppose that $x>y$, so
that $x\vee y>y$.
Since $y\in \varphi(v',\ell')$ and $x\in \varphi(v,\ell)$, we have
\begin{eqnarray*}
f(x\vee y,v',\ell') &\geq& f(y,v',\ell') \mbox{ and, }\\ 
f(x\wedge y,v,\ell)&\geq& f(x,v,\ell).
\end{eqnarray*}
Using the fact that $f$ has strictly increasing differences, and
$x\vee y>y$, we get:
\begin{eqnarray*}
f(x\vee y,v',\ell')-f(y,v',\ell')&<&f(x\vee y,v,\ell)-f(y,v,\ell).
\end{eqnarray*}
By the definition of submodularity, we have:
\begin{eqnarray*}
f(x\vee y,v,\ell)-f(y,v,\ell)&\leq &f(x,v,\ell)-f(x\wedge y,v,\ell)
\end{eqnarray*}
Hence we finally get:
\begin{eqnarray*}
0&\leq& f(x\vee y,v',\ell')-f(y,v',\ell')\\
&<&f(x,v,\ell)-f(x\wedge y,v,\ell)\leq 0,
\end{eqnarray*}
which provides the desired contradiction.
\end{proof}
\begin{remark}\label{rem:sub}
It follows from the proof, that the sufficient conditions on $f$ to
insure that $\varphi$ is non-decreasing, are equivalent
to:
$f(x,v,\ell)-f(x,v',\ell')$ is
strictly increasing in $x$ for all $(v',\ell')>(v,\ell)$.
\end{remark}
\begin{proof}{of Proposition \ref{prop:deriv}:}\\
It follows from the definition of submodularity, that if $f$ is
twice-continuously differentiable, then $\frac{\partial^2 f}{\partial x\partial v} (x,v,\ell)\leq 0$ implies that $f$ is strictly submodular in the variables
$x$ and $v$ in $X\times [0,1]$ for any fixed $\ell$.
Taking, $f(x,v,\ell)= \ell p(x,v)+x$, we get $\frac{\partial^2 f}{\partial x\partial v} (x,v,\ell)=\ell
\frac{\partial^2 p}{\partial x\partial v} (x,v)$, we get one of the
condition of Proposition \ref{prop:deriv}. The other condition comes
from the symmetric condition on $f$: 
$\frac{\partial^2 f}{\partial x\partial \ell} (x,v,\ell)\leq 0$.
\end{proof}

\subsection{A simple model and the $1/e$ rule}

Consider now a scenario, where there are $K$ possible protections, where $K$ can be infinite. 
Each protection $j$ is characterized by a cost denoted $x_j>0$ and a
function $s_j(v)$ from $[0,1]$ to $[0,1]$ with the following
interpretation: if the system has a probability of loss $v$ without
the protection $j$, applying the protection $j$ will lower this
probability by a factor of $s_j(v)$ (at a cost $x_j$)

If an agent applies two different protections say $i$ and $j$, then we will assume that
the resulting probability of loss is $s_i(v)s_j(v)$. The rational
behind this assumption is that the protections are independent in a
probabilistic sense. The probability of a successful attack is the
product of the probabilities to elude each of the protections.

For a total budget of $x$, the agent will choose the subset $J\in
[K]=\{1,2,\dots , K\}$ such that $\sum_{j\in J} x_j\leq x$ and which
  minimizes the final probability of loss $\prod_{j\in J}s_j(v)$.
Hence we define the function $p:\mathbb{R}_+\to \mathbb{R}_+$ by,
$p(x,v) =\inf\left\{ \prod_{j\in J}s_j(v)\mbox{ s.t }  \sum_{j\in J} x_j\leq x\right\}$,
so that the optimal security investment problem is still given by (\ref{eq:gl}).
The problem of deriving the function $p(x,v)$ is a standard integer
linear programming problem which can be rewritten as follows $\log p(x,v) =
\inf \left\{\sum_{i\in [K]}e_i\log s_j(v)|\: e_i\in
  \{0,1\}, \sum_{i\in [K]}e_ix_i\leq x \right\}$.

Our aim here is not to address issues dealing with complexity (this problem is known as the knapsack problem) and we
will consider the relaxed problem where $e_i\in [0,1]$. In this case,
the problem is a linear program which is a convex optimization
problem.
The important thing for us is that the function $x\mapsto p(x,v)$ is log-convex
in $x$.
We then have the following generalization of Gordon and Loeb's
Proposition 3:
\begin{theorem}\label{th:1/e}
If the function $x\mapsto p(x,v)$ is non-increasing and log-convex in $x$ then the optimal security investment is bounded by $\ell v/e$.
\end{theorem}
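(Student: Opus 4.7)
The plan is to reduce the question to the first-order condition at an interior optimum and combine it with the tangent inequality coming from log-convexity. First, observe that if the optimum is $x^{*}=0$, there is nothing to prove since $0 \leq \ell v/e$. Otherwise, at an interior optimum $x^{*}>0$ of $x \mapsto \ell p(x,v)+x$, differentiation in $x$ (justified a.e.\ by convexity of $\log p$; at a minimizing $x^{*}$ we may select any subgradient) yields the first-order condition
\begin{equation*}
\ell\,\frac{\partial p}{\partial x}(x^{*},v) = -1.
\end{equation*}

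Next, set $q(x)=\log p(x,v)$. By assumption $q$ is convex on $\mathbb{R}_{+}$, so the tangent inequality at $x^{*}$ evaluated at $x=0$ gives
\begin{equation*}
q(0) \;\geq\; q(x^{*}) + q'(x^{*})(0-x^{*}),
\end{equation*}
which, using $p(0,v)=v$, rearranges to
\begin{equation*}
\log\frac{v}{p(x^{*},v)} \;\geq\; -\,\frac{\partial_x p(x^{*},v)}{p(x^{*},v)}\,x^{*}.
\end{equation*}
Substituting the first-order condition $\partial_{x}p(x^{*},v)=-1/\ell$ into the right-hand side gives
\begin{equation*}
\log\frac{v}{p(x^{*},v)} \;\geq\; \frac{x^{*}}{\ell\,p(x^{*},v)},
\qquad\text{i.e.,}\qquad
x^{*} \;\leq\; \ell\, p(x^{*},v)\,\log\!\frac{v}{p(x^{*},v)}.
\end{equation*}

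Finally, introduce $r = p(x^{*},v)/v \in (0,1]$ (using $p(x,v)\leq v$); the bound becomes $x^{*}\leq -\ell v\, r\log r$. An elementary one-variable calculation shows that $r\mapsto -r\log r$ on $(0,1]$ is maximized at $r=1/e$ with value $1/e$, so $x^{*}\leq \ell v/e$ as claimed.

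The main obstacle is the smoothness issue at $x^{*}$: $p$ is only assumed log-convex and non-increasing, not $C^{1}$. This is handled by working with subgradients of the convex function $q=\log p$ at $x^{*}$ and using the optimality of $x^{*}$ to pick a subgradient that makes the first-order relation hold; the tangent inequality above is valid for any subgradient, so the estimate carries through. One should also briefly check that an optimum is attained (the objective is bounded below by $0$ and grows like $x$ at infinity since $p\geq 0$), or equivalently argue on a minimizing sequence and pass to the limit.
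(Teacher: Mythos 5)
Your proof is correct and follows essentially the same route as the paper's: the first-order condition $\ell\,\partial_x p(x^*,v)=-1$ at an interior optimum combined with the convexity (tangent) inequality for $\log p$ at $x^*$ evaluated at $x=0$, followed by the elementary maximization of $-r\log r$ (equivalently $z e^{-z}$ in the paper's variables). The only difference is in handling the nonsmooth case, where you use subgradient calculus to obtain the same tangent inequality that the paper derives by a perturbation/contradiction argument from the optimality of $x^*$; both are valid.
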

\begin{proof}
We denote $x^*$ the optimal investment and $p^*=p(x^*,v)$, so that
\begin{eqnarray}
\label{eq:alt}\ell p^*+x^*\leq \ell p(x,v)+x.
\end{eqnarray}
We denote $f(x) = \log \ell p(x,v)$.
Firs assume that $x\mapsto p(x,v)$ is continuously differentiable so
that we have
\begin{eqnarray}
\nonumber f(x) &\geq& f(x^*) +f'(x^*)(x-x^*)\\
\label{eq:low}&=& \log \ell p^* -\frac{1}{\ell p^*}(x-x^*),
\end{eqnarray}
where, in the last equality, we used (\ref{eq:1od}). Hence we have,
$f(0) \geq \log\ell p^* +\frac{x^*}{\ell p^*}$,
which can be rewritten as
\begin{eqnarray*}
\ell v \frac{x^*}{\ell p^*} \exp\left(-\frac{x^*}{\ell p^*}\right)\geq x^*.
\end{eqnarray*}
The theorem follows in this case from the observation that $z\exp (-z)\leq e^{-1}$ for $z\geq 0$.

If we do not assume that $x\mapsto p(x,v)$ is continuously differentiable, we will show (\ref{eq:low}) using (\ref{eq:alt}). Namely, suppose there exists $x'\geq 0$ such that
\begin{eqnarray*}
f(x')<\log \ell p^* -\frac{1}{\ell p^*}(x'-x^*).
\end{eqnarray*}
Then by convexity, we have for any $\alpha\in [0,1]$,
\begin{eqnarray*}
f(\alpha x'+(1-\alpha)x^*) &\leq& f(x^*)+\alpha\left( f(x')-f(x^*)\right)\\
&<& \log \ell p^* -\frac{\alpha}{\ell p^*}(x'-x^*).
\end{eqnarray*}
However, by (\ref{eq:alt}), we also have
\begin{eqnarray*}
f(\alpha x'+(1-\alpha)x^*) &\geq & \log\left( \ell p^* -\alpha(x'-x^*)\right)\\
&=& \log\ell p^* -\frac{\alpha}{\ell p^*}(x'-x^*)+O(\alpha^2),
\end{eqnarray*}
and we obtain a contradiction. Hence (\ref{eq:low}) is still true in this case and we can finish the proof as above so that the statement of the theorem holds.
\end{proof}
Theorem \ref{th:1/e} shows that for a broad class of information
security breach probability function, the optimal security investment
is always less than 37\% of the expected loss without protection.
Note that the function $p_{GL}$ introduced above does not satisfy the
conditions of Theorem \ref{th:monot1} but is log-convex so that in
this case, the optimal security investment is always less than 37\% of
the expected loss. Indeed, we saw that for high values of the
vulnerability, the optimal investment is zero. We end this section
with another function $p(x,v)=\frac{v}{(ax+1)^b}$ with
$a,b>0$, which satisfies both
the conditions of Theorems \ref{th:monot1} and \ref{th:1/e}. Hence in
this case, the optimal security investment increases with the
vulnerability but remains below 37\% of the expected loss without protection.

\section{Optimal security investment for an interconnected agent}\label{sec:inter}

We now extend the previous framework in order to model an agent who needs
to decide the amount to spend on security if this agent is part of a
network. In this section, we give results concerning the incentives of an agent in a
network. In the next section, we will consider a security game
associated to this model of agent and determine the equilibrium
outcomes.

\subsection{General model for an interconnected agent}

In order to capture the effect of the network, we will assume that
each agent faces an internal risk and an indirect risk. As explained
in the introduction, the indirect risk takes into account the fact
that a loss can propagate in the network. The estimation of the
internal risk depends only on private information available to the
agent. However in order to decide on the amount to invest in security, the agent needs
also to evaluate the indirect risk. This evaluation depends crucially
on the information on the propagation of the risk in the network available to the
decision-maker. We now describe an abstract and general setting for the
information of the agent.

We assume that the information
concerning the impact of the network on the security of the agent is
captured by a parameter $\gamma$ living in a partially ordered set $\Gamma$
(poset, i.e a set on which there is a binary relation that is
reflexive, antisymmetric and transitive). Indeed this assumption is
not a technical assumption. The interpretation is as follows: $\gamma$
captures the state of the network from the point of view of security
and we need to be able to compare secure states from unsecure ones.

Given $\gamma\in \Gamma$, the agent is able to compute the probability of loss
for any amount $x\in X$ invested in security which is denoted by $p(x,v,\gamma)$.
We still assume that the agent is risk neutral , so that the optimal
security investment is given by:
\begin{eqnarray}
\label{eq:argmin}\varphi(v,\ell,\gamma) =\arg\min\{\ell p(x,v,\gamma)+x:\:x\in X\}.
\end{eqnarray}
 
Note that in our model we consider that
only global statistics about the network are available to all
agents. The state of the network $\gamma$ is public. A 'high' value of
$\gamma$ corresponds to a secure environment, typically with a high
fraction of the population investing in security while a 'low' value
of $\gamma$ corresponds to an unsecure environment with few people
investing in security.
For example, in the epidemic risk model
described below, decision regarding investment are binary and the
public information consists of the parameters of the epidemic risk
model (which are supposed to be fixed) and the fraction $\gamma$ of the population
investing in security. Then for any $\gamma\in [0,1]$, the agent is
able to compute $p(x,v,\gamma)$ as explained below.
Note that in our model, the vulnerability $v$ of an agent is an intrinsic parameter of this agent, in particular it does not depend on the behavior of others or $\gamma$.

\subsection{Epidemic risks model}\label{sec:epidrisk}

In order to gain further insight, we consider in this section the case of economic agents subject to
epidemic risks. This model was introduced in \cite{sig08}. We concentrate here on a simplified version
presented in \cite{lel09}.
In this section, we focus on the dependence of $p(x,v,\gamma)$ in $x$
and $\gamma$. For ease of notation, we remove the explicit dependence
in the vulnerability $v$.

For simplicity, we assume that each agent has a discrete choice regarding self-protection, so that
$X=\{0,1\}$. If she decides to invest in self-protection, we set $x=1$
and say that the agent is in state $S$ as secure, otherwise we set
$x=0$ and say that the agent is in state $N$ as non-secure or negligent.
Note that if the cost of the security product is not one, we can still
use this model by normalizing the loss $\ell$ by the cost of the
security investment.
In order to take her decision, the agent has to evaluate $p(0,\gamma)$ and
$p(1,\gamma)$. To do so, we assume that global statistics on the
network and on the epidemic risks are publicly available and that the
agent uses a simple epidemic model that we now describe.

Agents are represented by vertices of a graph and face two types of
losses: direct and indirect (i.e. due to their neighbors).
We assume that an agent in state $S$ cannot experience a direct
loss and an agent in state $N$ has a probability $p$ of direct loss.
Then any agent experiencing a direct loss 'contaminates' neighbors independently of each
others with probability $q$ if the neighbor is in state $S$ and
$q^+$ if the neighbor is in state $N$, with $q^+\geq q$.
Since only global statistics are available for the graph, we will
consider random families of graphs $G^{(n)}$ with $n$ vertices and
given vertex degree with a typical node having degree distribution
denoted by the random variable $D$ (see \cite{dr07}).
In all cases, we assume that the family of graphs $G^{(n)}$ is
independent of all other processes.
All our results are related to the large population limit ($n$ tends
to infinity). In particular, we are interested in the fraction of the
population in state $S$ (i.e. investing in security) and denoted by 
$\gamma$.

Using this model the agent is able to compute the functions
$p(0,\gamma)$ and $p(1,\gamma)$ thanks to the following result proved
in \cite{sig08} and \cite{netecon08} (using a local mean field):
\begin{proposition}\label{prop:netecon}
Let $\Psi(x)=\mathbb{E}[x^D]$ be the generating function of the degree
distribution of the graph. For any $\gamma\in [0,1]$, there is a unique solution in $[0,1]$ to
the fixed point equation:
$y=1-\gamma \Psi(1-qy)-(1-\gamma)(1-p)\Psi(1-q^+ y)$,
denoted by $y(\gamma)$. Moreover the function $\gamma\mapsto
y(\gamma)$ is non-increasing in $\gamma$. Then we have,
$p(1,\gamma) = 1-\Psi(1-q y(\gamma))$, $p(0,\gamma) = 1-(1-p)\Psi(1-q^+ y(\gamma))$.
\end{proposition}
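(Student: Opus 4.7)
My plan is to treat the three claims of the proposition separately, exploiting the convexity of the probability generating function $\Psi$. For fixed $\gamma$, set
\[
F(y,\gamma)=1-\gamma\Psi(1-qy)-(1-\gamma)(1-p)\Psi(1-q^+y)
\]
and look for fixed points in $[0,1]$. Since $\Psi:[0,1]\to[0,1]$ with $\Psi(1)=1$, the map $F(\cdot,\gamma)$ sends $[0,1]$ into itself; the endpoint values $F(0,\gamma)=(1-\gamma)p\geq 0$ and $F(1,\gamma)\leq 1$ imply that $G(y):=y-F(y,\gamma)$ satisfies $G(0)\leq 0\leq G(1)$, so a fixed point exists by the intermediate value theorem. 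For uniqueness, I would use $\Psi''(x)=\mathbb{E}[D(D-1)x^{D-2}]\geq 0$ on $[0,1]$, which makes $F(\cdot,\gamma)$ concave and therefore $G$ convex; a convex function on $[0,1]$ with $G(0)\leq 0\leq G(1)$ has $\{G\leq 0\}$ equal to a sub-interval $[0,b]$ with $G(b)=0$ and $G>0$ on $(b,1]$, pinning down $y(\gamma)=b$ uniquely under the mild non-degeneracy that $\Psi$ is not affine.

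The second claim, $\gamma\mapsto y(\gamma)$ non-increasing, I would derive by differentiating $F$ in $\gamma$:
\[
\frac{\partial F}{\partial\gamma}(y,\gamma)=-\Psi(1-qy)+(1-p)\Psi(1-q^+y).
\]
Because $q\leq q^+$ and $\Psi$ is non-decreasing on $[0,1]$, one has $\Psi(1-qy)\geq\Psi(1-q^+y)$, so $\partial_\gamma F\leq -p\Psi(1-q^+y)\leq 0$; equivalently $G_\gamma(y)=y-F(y,\gamma)$ is pointwise non-decreasing in $\gamma$. The level set $\{G_\gamma\leq 0\}=[0,y(\gamma)]$ therefore shrinks as $\gamma$ grows, which is exactly $\gamma_1\leq\gamma_2\Rightarrow y(\gamma_1)\geq y(\gamma_2)$.

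For the two explicit formulas I would invoke the local mean-field approximation on the random graph $G^{(n)}$ and interpret $y(\gamma)$ as the limiting probability that a uniformly random vertex experiences a loss. Conditional on a vertex being in state $S$ (probability $\gamma$), each of its $D$ neighbors is asymptotically independently infected with probability $y(\gamma)$ and, if infected, transmits to this $S$ vertex with probability $q$, so the chance the vertex avoids all neighbour-induced losses is $\mathbb{E}[(1-qy(\gamma))^D]=\Psi(1-qy(\gamma))$, giving $p(1,\gamma)=1-\Psi(1-qy(\gamma))$. The analogous conditioning on state $N$ adds the factor $(1-p)$ for absence of direct loss and uses the larger transmission rate $q^+$, yielding $p(0,\gamma)=1-(1-p)\Psi(1-q^+y(\gamma))$; a consistency check, $\gamma p(1,\gamma)+(1-\gamma)p(0,\gamma)=y(\gamma)$, recovers the fixed-point equation and closes the loop. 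The hard part here is the rigorous justification of the local mean field itself — that on sparse random graphs with degree generating function $\Psi$ the neighbourhoods of a typical vertex factorize asymptotically, so that the "independent-neighbour" computation is exact in the $n\to\infty$ limit — a standard but non-trivial argument obtained by local weak convergence to a Galton--Watson branching tree, for which I would defer to \cite{sig08,netecon08}.
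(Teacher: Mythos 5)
The paper does not actually prove this proposition: it is imported from \cite{sig08} and \cite{netecon08} with only a citation, so your write-up supplies an argument where the paper gives none. Your fixed-point analysis is the standard one and is essentially correct: $F(\cdot,\gamma)$ maps $[0,1]$ into itself with $F(0,\gamma)=(1-\gamma)p$; $G=y-F(y,\gamma)$ is convex because $\Psi''\geq 0$ makes $F$ concave in $y$; and the pointwise bound $\partial_\gamma F\leq -p\,\Psi(1-q^+y)\leq 0$ correctly yields $y(\gamma_1)\geq y(\gamma_2)$ for $\gamma_1\leq\gamma_2$ via the shrinking sublevel sets $\{G_\gamma\leq 0\}=[0,y(\gamma)]$. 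Deferring the local mean-field justification of the two closed-form expressions to \cite{sig08} and \cite{netecon08} is exactly what the paper itself does; note only that your ``consistency check'' $\gamma p(1,\gamma)+(1-\gamma)p(0,\gamma)=y(\gamma)$ is the definition of $y$ as the unconditional infection probability rather than an independent verification.

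One step needs repair: uniqueness. Convexity of $G$ pins down a unique zero only when $G(0)<0$, i.e.\ when $(1-\gamma)p>0$: two zeros $0<b<b'$ would force $G(b)\leq \frac{b'-b}{b'}G(0)<0$ by convexity, a contradiction. Your substitute hypothesis that $\Psi$ is not affine does not close the degenerate case $G(0)=0$. At $\gamma=1$ (or $p=0$) one has $F(0,\gamma)=0$, so $y=0$ is always a fixed point, and whenever $q\,\mathbb{E}[D]>1$ a second, strictly positive fixed point exists (e.g.\ Poisson degrees give $G(y)=y-1+e^{-q\lambda y}$ with $G'(0)=1-q\lambda<0$). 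So uniqueness for \emph{all} $\gamma\in[0,1]$ fails in the supercritical regime; this is a boundary defect inherited from the statement itself, and the clean fix is either to assume $p>0$ and restrict to $\gamma<1$, or to define $y(\gamma)$ as the largest fixed point, for which your monotonicity argument goes through unchanged.
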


If we define $h(\gamma) = p(0,\gamma)-p(1,\gamma)$ as the difference of
the two terms given in Proposition \ref{prop:netecon}, we see that the optimal decision is:
\begin{eqnarray}
\label{eq:optic}\ell h(\gamma) > 1 &\Leftrightarrow &\mbox{ agent invests.}
\end{eqnarray}
This equation can be seen as a discrete version of (\ref{eq:1od}). If
the benefit of the protection which is $\ell h(\gamma)$ is more than
its cost (here normalized to one), the agent decides to invest,
otherwise the agents does not invest.
In particular, we observe that the condition for the incentive to invest in
security to increase with the fraction of population investing in
security is given by:
\begin{eqnarray}
\label{eq:h} h(\gamma) = p(0,\gamma)-p(1,\gamma)\mbox{ is an increasing function.}
\end{eqnarray}
We show in the next section that this result extends to a much more
general framework.

Before that, we recall some results of \cite{lel09} describing two
simple cases, one where the condition (\ref{eq:h}) holds and the other
where it does not. The computation presented in this section are done for the standard
Erd\"os-R\'enyi random graphs:
$G^{(n)}=G(n,\lambda/n)$ on $n$ nodes $\{0,1,\dots,n-1\}$, where each
potential edge $(i,j)$, $0\leq i<j\leq n-1$ is present in the graph
with probability $\lambda/n$, independently for all $n(n-1)/2$
edges. Here $\lambda>0$ is a fixed constant independent of $n$ equals
to the (asymptotic as $n\to \infty$) average number of neighbors of an
agent. As explained in the next section, these results extend to a
much more general framework without modifying the qualitative
insights.

We will consider two cases:

{\bf Strong protection:} an agent investing in
  protection cannot be harmed at all by the actions or
  inactions of others: $q=0$.
In this case, we have $p(1,\gamma)=0$ so that $h(\gamma) =
p(0,\gamma)$ which is clearly a non-increasing function of $\gamma$ as
depicted on Figure \ref{fig:strong}. 

\begin{figure}[htb]
\begin{center}
\includegraphics[width=4cm]{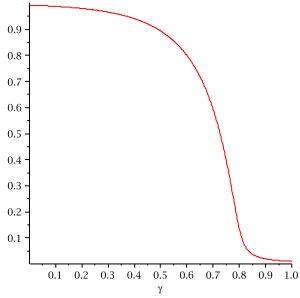}
\hspace{25pt} \caption{Function $h(\gamma)$ for strong
  protection as a function of $\gamma$; $\lambda=10$, $q^+=0.5$, $p=0.01$} \label{fig:strong}
\end{center}\end{figure}
As $\gamma$ the fraction of agents investing in protection
increases, the incentive to invest in protection decreases. In
fact, it is less attractive for an agent to invest in protection,
should others then decide to do so. As more agents invest, the
expected benefit of following suit decreases since there is 
a lower probability of loss, the network becoming more secure.

{\bf Weak protection:} investing in protection does lower
  the probability of contagion $q$ but it remains positive: $0<q< q^+$.
In this case, the map $\gamma \mapsto h(\gamma)$ can be non-decreasing for
small value of $\gamma$ and decreasing for values of $\gamma$ close to
one (see Figure \ref{fig:weak}).
\begin{figure}[htb]
\begin{center}
\includegraphics[width=4cm]{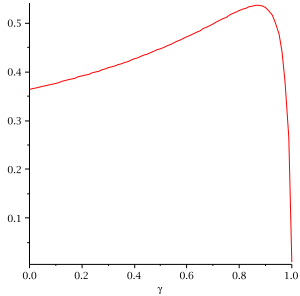}
\hspace{25pt} \caption{Function $h(\gamma)$ for weak
  protection as a function of $\gamma$; $\lambda=10$, $q^+=0.5$, $p^+=0.01$ and $q=0.1$}\label{fig:weak}
\end{center}\end{figure}
For small values of $\gamma$, the incentive for an agent to invest in
security actually increases with the proportion of agents investing in
security (recall Condition (\ref{eq:h})). We will see in the next
section, that this alignment of incentives is responsible for a
coordination problem when agents are strategic.

\subsection{Sufficient conditions for monotone investment in a
  network}

We now show how the condition (\ref{eq:h}) extends to a general
framework. This extension is given by the following result:
\begin{theorem}\label{th:gen}
If the function $p(x,v,\gamma)- p(x,v',\gamma')$ is strictly
increasing in $x\in X$ for any $(v',\gamma')>(v,\gamma)$ and the
function $p(x,v,\gamma)$ is non-increasing in $x$,
then $\varphi(v,\ell,\gamma)$ defined in (\ref{eq:argmin}) is
non-decreasing.
\end{theorem}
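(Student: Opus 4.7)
The plan is to follow the contradiction template of Theorem \ref{th:monot1} applied to $f(x,v,\ell,\gamma) = \ell p(x,v,\gamma) + x$. Fixing $(v,\ell,\gamma) \leq (v',\ell',\gamma')$ with $x \in \varphi(v,\ell,\gamma)$ and $y \in \varphi(v',\ell',\gamma')$, I assume for contradiction that $x > y$ and aim to combine the optimality of $x$ and $y$ with a strictly-increasing-differences inequality for $f$ in $x$ versus $(v,\ell,\gamma)$. By Remark \ref{rem:sub}, it suffices to show that $x \mapsto f(x,v,\ell,\gamma) - f(x,v',\ell',\gamma')$ is strictly increasing for every $(v',\ell',\gamma') > (v,\ell,\gamma)$; expanded, this quantity equals $\ell\bigl[p(x,v,\gamma) - p(x,v',\gamma')\bigr] + (\ell - \ell')\, p(x,v',\gamma')$.

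When $(v,\gamma) < (v',\gamma')$ strictly, the first hypothesis makes the bracketed factor strictly increasing in $x$, hence the first summand is strictly increasing in $x$ after multiplication by $\ell$, provided $\ell > 0$ (the degenerate subcase $\ell = 0$ is immediate because $\varphi(v,0,\gamma) = \arg\min_{x \in X} x$ is then independent of $(v,\gamma)$). For the second summand, $\ell - \ell' \leq 0$ combined with the second hypothesis (so $p(\cdot,v',\gamma')$ is non-increasing) makes $x \mapsto (\ell - \ell')\, p(x,v',\gamma')$ non-decreasing in $x$. The sum is therefore strictly increasing in $x$, and the contradiction argument of Theorem \ref{th:monot1} applies verbatim.

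The residual case $v = v'$, $\gamma = \gamma'$ and $\ell < \ell'$ is not covered by this reduction, since hypothesis (a) is silent there, and I would dispatch it directly. Adding the two optimality inequalities $\ell p(x,v,\gamma) + x \leq \ell p(y,v,\gamma) + y$ and $\ell' p(y,v,\gamma) + y \leq \ell' p(x,v,\gamma) + x$ yields $(\ell' - \ell)\bigl[p(x,v,\gamma) - p(y,v,\gamma)\bigr] \geq 0$. Since $\ell' - \ell > 0$ while the second hypothesis and $x > y$ force $p(x,v,\gamma) \leq p(y,v,\gamma)$, equality $p(x,v,\gamma) = p(y,v,\gamma)$ is obtained; plugging this back into either of the two inequalities collapses it to $y \geq x$, contradicting $x > y$.

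The main obstacle is precisely this split: the hypotheses encode strict submodularity only in the $(x,(v,\gamma))$ block while controlling the $(x,\ell)$ interaction only weakly through the second hypothesis. One cannot simply reduce to Theorem \ref{th:monot1}, and it is the short direct argument in the pure-$\ell$ direction that upgrades the weak monotonicity of $p$ in $x$ into the desired comparative statics conclusion.
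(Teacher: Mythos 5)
Your proof is correct, and in the main case it is essentially the paper's argument: the paper establishes that $\ell p(x,v,\gamma)-\ell' p(x,v',\gamma')$ has strictly increasing differences in $x$ by summing two inequalities that correspond exactly to your decomposition $\ell\bigl[p(x,v,\gamma)-p(x,v',\gamma')\bigr]+(\ell-\ell')p(x,v',\gamma')$, with the first summand strictly increasing by the hypothesis on $p$ and the second non-decreasing because $\ell-\ell'\leq 0$ and $p$ is non-increasing in $x$; it then concludes via Remark \ref{rem:sub}. Where you genuinely add something is the residual direction $v=v'$, $\gamma=\gamma'$, $\ell<\ell'$: the paper's first displayed inequality silently invokes the hypothesis for $(v',\gamma')>(v,\gamma)$, which is vacuous (and false as a strict inequality, $0>0$) when $(v',\gamma')=(v,\gamma)$, and indeed strictly increasing differences can fail there if $p$ is locally constant in $x$. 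Your direct argument --- adding the two optimality inequalities to force $p(x,v,\gamma)=p(y,v,\gamma)$ and then collapsing one of them to $x\leq y$ --- correctly dispatches this case without needing strictness, and is the right fix; the observation that the reduction to Theorem \ref{th:monot1} cannot cover the pure-$\ell$ direction is accurate. The $\ell=0$ aside is harmless bookkeeping (the paper takes $\ell>0$).
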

\begin{proof}
As noticed in Remark \ref{rem:sub}, we need to prove that our
condition ensures that $\ell p(x,v,\gamma)-\ell' p(x,v',\gamma')$ is strictly
increasing in $x\in X$ for any $(v',\ell',\gamma')>(v,\ell,\gamma)$.
If $\ell=\ell'$, this follows from the condition of the theorem. We
now deal with the case $\ell'>\ell$. Let $x'>x$, then by the condition
of the theorem, we have
\begin{eqnarray*}
\ell p(x',v,\gamma)-\ell p(x',v',\gamma') > \ell p(x,v,\gamma)-\ell p(x,v',\gamma'),
\end{eqnarray*}
but since $\ell'>\ell$ and $p(x,v',\gamma')-p(x',v',\gamma')\geq 0$ for $x'>x$, we also have
\begin{eqnarray*}
\ell p(x',v',\gamma')-\ell'p(x',v,',\gamma') > \ell p(x,v',\gamma')-\ell' p(x,v',\gamma').
\end{eqnarray*}
Summing these inequalities gives exactly the desired result.
\end{proof}
\begin{remark}
Clearly, the condition of Theorem \ref{th:gen} translates in the
setting described in Section \ref{sec:epidrisk} to
$p(0,\gamma)-p(0,\gamma')< p(1,\gamma)-p(1,\gamma'),\mbox{ for any }\gamma'>\gamma$,
which corresponds exactly to (\ref{eq:h}).
\end{remark}
In the particular case where $\Gamma$ is a subset of $\mathbb{R}$, and
under some smoothness conditions, we obtain:

\begin{proposition}\label{prop:3}
If the function $p(x,v,\gamma)$ is twice continuously differentiable
on $X\times [0,1]\times \Gamma$, then sufficient conditions for
$\varphi(v,\ell,\gamma)$ to be non-decreasing are:
$\frac{\partial p}{\partial x}(x,v,\gamma) \leq 0$, $\frac{\partial^2
  p}{\partial x\partial v}(x,v,\gamma)\leq 0$, $\frac{\partial^2
  p}{\partial x\partial \gamma} (x,v,\gamma)\leq 0$.
\end{proposition}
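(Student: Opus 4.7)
The plan is to reduce Proposition \ref{prop:3} to Theorem \ref{th:gen}, whose hypotheses require (i) $p(x,v,\gamma)$ to be non-increasing in $x$, and (ii) $p(x,v,\gamma)-p(x,v',\gamma')$ to be strictly increasing in $x$ for every $(v',\gamma')>(v,\gamma)$. Condition (i) is immediate from the first hypothesis $\frac{\partial p}{\partial x}(x,v,\gamma)\leq 0$, so the real work lies in establishing (ii) from the two mixed-partial hypotheses.

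For (ii), I would differentiate $p(x,v,\gamma)-p(x,v',\gamma')$ with respect to $x$ and show that this derivative is non-negative. Since $\Gamma\subset \mathbb{R}$, the increment of $\frac{\partial p}{\partial x}$ from $(v,\gamma)$ to $(v',\gamma')$ can be decomposed along the monotone path through the intermediate point $(v',\gamma)$, and the fundamental theorem of calculus applied to each leg yields
\[
\frac{\partial p}{\partial x}(x,v',\gamma') - \frac{\partial p}{\partial x}(x,v,\gamma) = \int_v^{v'}\frac{\partial^2 p}{\partial x\partial v}(x,s,\gamma)\,ds + \int_\gamma^{\gamma'}\frac{\partial^2 p}{\partial x\partial\gamma}(x,v',t)\,dt.
\]
Both integrands are $\leq 0$ by the second and third hypotheses of the proposition, so the right-hand side is non-positive. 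Rearranging gives $\frac{\partial}{\partial x}\bigl[p(x,v,\gamma)-p(x,v',\gamma')\bigr]\geq 0$, hence $x\mapsto p(x,v,\gamma)-p(x,v',\gamma')$ is non-decreasing.

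The main subtlety --- the same one already encountered in the proof of Proposition \ref{prop:deriv} --- is the gap between the non-strict hypotheses ``$\leq 0$'' and the ``strictly increasing'' requirement of Theorem \ref{th:gen}. Following the convention used there, the non-strict mixed-partial conditions are read as yielding the submodularity sufficient to conclude that $\varphi(v,\ell,\gamma)$ is non-decreasing. If one instead wanted strict monotonicity of $\varphi$, the hypotheses would have to be strengthened to strict inequalities on the mixed partials, but this strengthening is not needed for the statement of Proposition \ref{prop:3} as given.
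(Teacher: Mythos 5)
Your proposal is correct and follows essentially the route the paper intends: the paper gives no explicit proof of Proposition \ref{prop:3}, but by analogy with its proof of Proposition \ref{prop:deriv} the argument is exactly a reduction to Theorem \ref{th:gen} (via Remark \ref{rem:sub}), with the non-positive mixed partials supplying the decreasing-differences condition --- your fundamental-theorem-of-calculus decomposition along the path through $(v',\gamma)$ simply makes that implication explicit. You also correctly flag the only real wrinkle, namely that the weak inequalities ``$\leq 0$'' deliver only non-strict monotone differences where Theorem \ref{th:gen} nominally asks for strict ones; this is the same gap the paper itself glosses over in Proposition \ref{prop:deriv}, so your treatment is faithful to the source.
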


As we will see in the next section satisfying the conditions of Theorem \ref{th:gen} (or
Proposition \ref{prop:3}) ensures that the incentives in the
population are aligned but this might lead to a coordination problem.

\section{Equilibrium analysis of the security game }\label{sec:equ}

We now present our results in a game-theoretic framework where each
agent is strategic.
We assume that the effect of the action of any single agent is infinitesimal but each agent anticipates the effect of the actions of all other agents on the security level of the network.

\subsection{Information structure and fulfilled expectations equilibrium}

In most of the literature on security games, it is assumed that the
player has complete information.
In particular, each player knows the probability of propagation of the
attack or failure from each other player in the network and also the cost
functions of other players. In this case, the agent is able to
compute the Nash equilibria of the games (if no constraint is made on
the computing power of the agent) and decides on her level of
investment accordingly. In particular, the agent is able to solve
(\ref{eq:argmin}) for all possible values of $\gamma$ which capture the
decision of all other agents. Note that even if only binary decisions are made
by agents the size of the set $\Gamma$ grows exponentially with the
number of players in the network. Moreover in a large network, the
complete information assumption seems quite artificial, especially for
security games where complete information would then implies that the
agents disclose information on their security strategy to the public
and hence to the potential attacker!

Here we relax the assumption of complete information. As in previous
section, we assume that each agent is able to compute the function
$p(x,v,\gamma)$ based on public information and on the epidemic risk
model. The values of the possible loss $\ell$ and the vulnerability $v$ are private information of the
agent and vary among the population.
In order to define properly the equilibrium of the game, we
assume that all players are strategic and are able to do this
computation. Hence if a player expect that a fraction $\gamma^e$ of
the population invests in security, she can decide for her own
investment.
We assume that at equilibrium expectations are fulfilled so that at
equilibrium the actual value of $\gamma$ coincides with $\gamma^e$.
This concept of fulfilled expectations equilibrium to model network externalities is standard in economics (see Section 3.6.2 in \cite{fk07}).

We now describe it in more details. For simplicity of the
presentation, we do not consider the dependence in the vulnerability
$v$ since in the security game, we focus on the monotonicity in $\gamma$ which will turn out to be crucial.
We also consider that the choice regarding investment is binary,
i.e. $X=\{0,1\}$.

We consider a heterogeneous population, where agents differ in loss
sizes only. This loss size $\ell$ is called the type of the agent.
We assume that agents
expect a fraction $\gamma^e$ of agents in state $S$, i.e. to make
their choice, they assume that the fraction of agents investing in
security is $\gamma^e$.
We now define a network externalities function that captures the
influence of the expected fraction of agents in state $S$ on the
willingness to pay for security.
Let the network externalities function be $h(\gamma^e)$. More
precisely, for an agent of type $\ell$, the willingness to pay for
protection in a network with a fraction $\gamma^e$ of the agents
in state $S$ is given by $\ell h(\gamma^e)$ so that if
\begin{eqnarray}
\label{eq:hy}\ell h(\gamma^e)\geq c, \mbox{ (where $c$ is the cost of
  the security option)}
\end{eqnarray}
the agent will invest and otherwise not. Hence
(\ref{eq:hy}) is in accordance with (\ref{eq:optic}) (where the cost
was normalized to one). Note that here, we
do not make any a priori assumption on the network externalities function $h$
which can be general and fit to various models.

Indeed, our model corresponds
exactly to the multiplicative formulation of Economides and Himmelberg
\cite{ecohim95} which allows different types of agents to receive
differing values of network externalities from the same network.
As explained above, agents with low $\ell$ have little or no use for the
protection whereas agents with high $\ell$ value highly
security. This is taken into account in our model since for a fixed
expected fraction of agents in state $S$, agents with high $\ell$ have
a higher willingness to pay for self-protection than agents with low
$\ell$.

Let the cumulative distribution function of types be $F(\ell)$, i.e the
fraction of the population having type lower than $\ell$ is given by
$F(\ell)\leq 1$.
We make the following hypothesis:
\begin{hypo}\label{hypoF}
$F(\ell)$ is continuous with positive density everywhere
on its support which is normalized to be $[0,1]$. 
\end{hypo}
Note in particular that $F$ is strictly increasing and it
follows that the inverse $F^{-1}(\gamma)$ is well-defined for
$\gamma\in[0,1]$.

Given expectation $\gamma^e$ and cost for protection $c$, all agents with type $\ell$
such that $\ell h(\gamma^e)> c$ will invest in protection. Hence
the actual fraction of agents investing in protection is given by
$\gamma = 1-F\left(\min\left(\frac{c}{h(\gamma^e)},1\right)\right)$.
Hence following
\cite{ecohim95}, we can invert this equation and we define the willingness to pay for the
last agent in a network of size $\gamma$ with expectation
$\gamma^e$ as
\begin{eqnarray}
\label{eq:wtp}w(\gamma,\gamma^e) = h(\gamma^e)F^{-1}(1-\gamma).
\end{eqnarray}
Seen as a function of its first argument, this is just an inverse
demand function: it maps the quantity of protection demanded to
the market price.
Because of externalities, expectations affect the
willingness to pay:
\begin{eqnarray}
\label{eq:d2wtp}\frac{\partial w}{\partial \gamma^e}(\gamma,\gamma^e) =
h'(\gamma^e)F^{-1}(1-\gamma).
\end{eqnarray}

For goods that do not exhibit network externalities, demand slopes
downward: as price increases, less of the good is demanded. This
fundamental relationship may fail in goods with network externalities.
If $h'(.)>0$, then the
willingness to pay for the last unit may increase as the number expected
to be sold increases as can be seen from (\ref{eq:d2wtp}):
$\frac{\partial w}{\partial \gamma^e}(\gamma,\gamma^e)>0$.
For example in \cite{ecohim95} studying the FAX market, as more and more
agents buy a FAX, the utility of the FAX increases since more and more
agents can be reached by this communication mean.
For a fixed cost $c$, in equilibrium, the expected
fraction $\gamma^e$ and the actual one $\gamma$ must satisfy
\begin{eqnarray}
\label{eq:eq}c =w(\gamma,\gamma^e) = h(\gamma^e)F^{-1}(1-\gamma).
\end{eqnarray}
If we assume moreover that in equilibrium, expectations are fulfilled, then the possible equilibria are given by the fixed point equation:
\begin{eqnarray}
c = w(\gamma,\gamma) = h(\gamma)F^{-1}(1-\gamma)=:w(\gamma).
\end{eqnarray}
We see that if $h'(.)>0$, the concept of fulfilled expectations
equilibrium captures the possible increase in the willingness to pay
as the number expected to be sold increases. This would corresponds to
the case where we have $w'(\gamma)>0$ for some values of $\gamma$.
In such cases, a critical mass phenomenon (as in the FAX market \cite{ecohim95}) can occurs
: there is a problem of coordination. 
We explain this phenomenon more formally in the next section and then show how our results differ from \cite{ecohim95}.
We end this section with the following important remark:
\begin{remark}\label{rem:unif}
The case of an homogeneous population in which all agents have the
same type, i.e the same loss size $\ell$ corresponds to the function
$F^{-1}$ being constant equal to $\ell$. In this case, the willingness
to pay is simply $w(\gamma)=h(\gamma)\ell$. In particular, the
epidemic risk model presented above can be used to model the network
externalities by the function $h(\gamma)$ computed in Section
\ref{sec:inter}. In this case, Condition (\ref{eq:h}) still gives a
condition for incentives to be align. As we will see next, this
condition might lead to  critical mass: if incentives are aligned, there is a
coordination problem!
\end{remark}

\subsection{Critical mass: coordination problem}

To determine the possible equilibria, we analize the shape of the
fulfilled expectations demand $w(\gamma)$. First we have $w(0)\geq 0$
which is equal to the value of the self-protection assuming there are
no network externalities. We also have $w(1)=0$ since by Hypothesis
\ref{hypoF}, we have $F^{-1}(0)=0$. In words, this means that there
are agents with very low $\ell$ who have little or no interest in
self-protection. Then in order to secure completely the network, we
have to convince even agents of very low willingness to pay.

The slope of the fulfilled expectations demand is
\begin{eqnarray}
\label{eq:p'}w'(\gamma) =
-\frac{h(\gamma)}{F'(F^{-1}(1-\gamma))}+h'(\gamma)F^{-1}(1-\gamma).
\end{eqnarray}

The first term measures the slope of the inverse demand without taking
into account the effect of the expectations. The second term
corresponds to the effect of an increase in the expected fraction of
agents in state $S$. If $h'(.)>0$ as in \cite{ecohim95}, it
corresponds to the increase in the willingness to pay of the last
agent investing in self-protection created by his own action in
joining the group of agents in state $S$.
Note that in any case, if the fraction of agents in state $S$ gets
very large, i.e. $\gamma\to 1$, the last agent investing in
self-protection has very low willingness to pay for it. Hence for
$\gamma$ close to one, the effect of marginal expectations on the
marginal agent investing in $S$ is negligible. Formally this is
observed by $\lim_{\gamma\to 1}h'(\gamma)F^{-1}(1-\gamma)=0$.
It follows that 
\begin{eqnarray}
\label{eq:p'1}\lim_{\gamma\to 1}w'(\gamma) = \lim_{\gamma\to
  1}-\frac{h(\gamma)}{F'(F^{-1}(1-\gamma))}=-\frac{h(1)}{F'(0)}<0.
\end{eqnarray}
Note that we allow $F'(0)=0$ in which case, Equation (\ref{eq:p'1})
should be interpreted as $\lim_{\gamma\to 1}w'(\gamma) = -\infty$.
The sign of $\lim_{\gamma\to 0}w'(\gamma)$ depends on the parameters
of the model and we will see that it is of crucial importance. 
We make the following hypothesis
\begin{hypo}\label{hypow}
The function $w(\gamma)$ is single-peaked.
\end{hypo}
Note that in the case of an homogeneous population, $w(\gamma)=h(\gamma)\ell$, where $h(\gamma)$ was computed in Section \ref{sec:inter} for the epidemic risk model and is single-peaked. 

We are now ready to state the main result of this section:
\begin{theorem}\label{th:comp}
Under Hypothesis \ref{hypoF} and \ref{hypow}, a network has positive critical mass if $\lim_{\gamma\to
  0}h'(\gamma)>0$ and either
\begin{itemize}
\item[(i)] $w(0)=0$, i.e. if all agents are in state $N$ then no agent
  is willing to invest in self-protection;
\item[(ii)] $\lim_{\gamma\to 0}h'(\gamma)$ is sufficiently large,
  i.e. there are large private benefits to join the group of agents in
  state $S$ when the size of this group is small;
\item[(iii)] $\lim_{\gamma\to 1}F'(\gamma)$ is sufficiently large,
  i.e. there is a significant density of agents who are ready to
  invest in self-protection even if the number of agents already in
  state $S$ is small.
\end{itemize}
\end{theorem}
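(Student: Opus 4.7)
The goal is to show that, under each of the three listed conditions together with the standing assumption $\lim_{\gamma \to 0} h'(\gamma)>0$, the fulfilled expectations demand $w(\gamma)=h(\gamma)F^{-1}(1-\gamma)$ attains its maximum on $[0,1]$ at some strictly positive $\gamma^{*}>0$. Given Hypothesis~\ref{hypow}, this is exactly what positive critical mass means for our model: there is an open range of costs $c$ with $c<\max w$ for which the fixed-point equation $c=w(\gamma)$ has two positive solutions, so that $\gamma=0$ coexists (when $c\ge w(0)$) with interior equilibria and a genuine coordination problem arises.

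The first step is to compute the boundary behaviour of $w$. By Hypothesis~\ref{hypoF} the support of $F$ is $[0,1]$, so $F^{-1}(1)=1$, and in particular $w(0)=h(0)$ while $w(1)=0$. Specializing~(\ref{eq:p'}) at $\gamma=0$ yields the right-derivative
\begin{equation*}
w'(0^{+}) \;=\; -\frac{h(0)}{F'(1)}+h'(0),
\end{equation*}
with the convention $w'(0^{+})=+\infty$ if $F'(1)=0$ and $h(0)>0$. The next step reduces the claim to the single inequality $\max_{[0,1]} w>w(0)$: indeed under Hypothesis~\ref{hypow} this inequality forces $\arg\max w>0$, and conversely $\arg\max w=0$ would make $w$ non-increasing on $[0,1]$, yielding a single decreasing branch of equilibria without critical mass.

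I then verify that $\max w > w(0)$ in each of the three cases. For~(i), $w(0)=0$ forces $h(0)=0$ (since $F^{-1}(1)=1$); combining with $\lim_{\gamma\to 0}h'(\gamma)>0$ gives $h(\gamma)>0$ on some interval $(0,\delta)$, and since $F^{-1}(1-\gamma)>0$ on $(0,1)$, we have $w(\gamma)>0=w(0)$ for small $\gamma>0$. For~(ii) and~(iii), the precise meaning of ``sufficiently large'' is $h'(0)F'(1)>h(0)$; either by making $h'(0)$ large (case~(ii)) or by making $F'(1)$ large (case~(iii), using $h'(0)>0$), the displayed formula for $w'(0^{+})$ becomes strictly positive, so $w$ is strictly increasing on a right-neighbourhood of $0$ and again $\max w>w(0)$. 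In all three cases, single-peakedness plus $w(1)=0$ places the maximum at some $\gamma^{*}\in(0,1)$, establishing positive critical mass.

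The only real subtlety is the informal phrasing ``sufficiently large'' in~(ii) and~(iii); the proof makes this explicit through the threshold $h(0)/(h'(0)F'(1))<1$ read off from~(\ref{eq:p'}). I do not expect a serious obstacle beyond this bookkeeping, since everything follows from the elementary sign analysis of $w'(0^{+})$ and the structural Hypothesis~\ref{hypow}.
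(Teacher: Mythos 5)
Your proof is correct and follows essentially the same route as the paper: both read off $w'(0^{+})=h'(0)-h(0)/F'(1)$ from (\ref{eq:p'}), interpret ``sufficiently large'' in (ii) and (iii) as making this quantity positive (with (i) handled via $h(0)=w(0)=0$), and use single-peakedness (Hypothesis \ref{hypow}) to convert the local increase of $w$ at $0$ into positive critical mass, exactly as in the paper's appeal to the Economides--Himmelberg criterion. The only blemish is your stated convention that $w'(0^{+})=+\infty$ when $F'(1)=0$ and $h(0)>0$: in that case the term $-h(0)/F'(1)$ diverges to $-\infty$, so the correct convention is $w'(0^{+})=-\infty$; since this clause is never invoked in any of your three cases, the argument is unaffected.
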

\begin{remark}
Note that if $h'(\gamma)>0$ for small values of $\gamma$, then
incentives are aligned by results of previous Section but this might
lead to a coordination problem. Indeed as shown by previous theorem,
this is a necessary condition for a network to exhibit positive
critical mass.
In the case of a homogeneous population (see Remark \ref{rem:unif}), the function $w(\gamma)$ is proportional to the function $h(\gamma)$ computed in Section \ref{sec:inter} for the epidemic risk model. In particular, in the case of weak protection, there is positive critical mass as shown by Figure \ref{fig:weak}.
\end{remark}
\begin{proof}
Since we proved that $\gamma \mapsto w(\gamma)$ is decreasing for
$\gamma$ close to one, there are only two possibilities: either is is
increasing for small values of $\gamma$ or it is decreasing for all
$\gamma$.
As explained in Lemma 1 of \cite{ecohim95}, the network has a positive critical
mass if and only if $\gamma\mapsto w(\gamma)$ is increasing for small
values of $\gamma$.

\begin{figure}[htb]
\begin{center}
\psfrag{pi}{$c$}
\psfrag{pi0}{$c_0$}
\psfrag{gam}{$\gamma$}
\psfrag{gami}{$\gamma_i$}
\psfrag{gams}{$\gamma_s$}
\psfrag{gam0}{$\gamma_0$}
\includegraphics[angle=0,width=4cm]{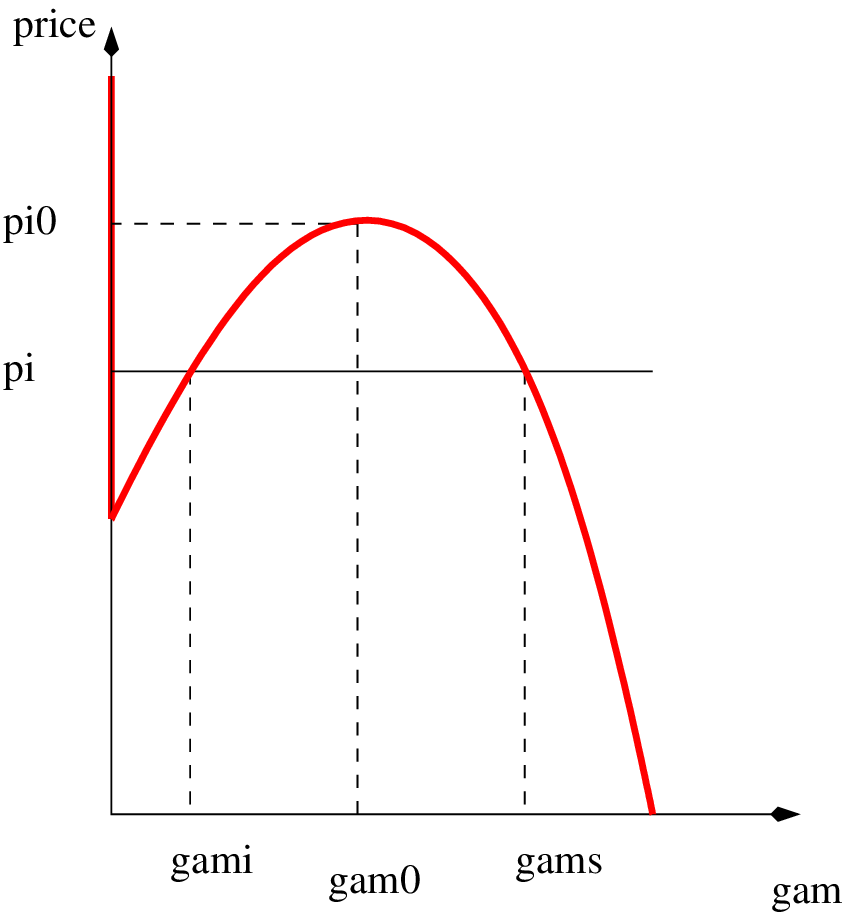}
\hspace{25pt} \caption{Willingness to pay curve (or
  demand curve) $w(\gamma)$} \label{fig:dem}
\end{center}\end{figure}
This is illustrated thanks to Figure \ref{fig:dem} (which should be
compared to Figure \ref{fig:weak}).
Recall that in equilibrium, we have
$w(\gamma^*)=h(\gamma^*)F^{-1}(1-\gamma^*)=c$.
If we imagine a
constant cost $c$ decreasing parametrically, the network will start at
a positive and significant size $\gamma^0$ corresponding to a cost
$c^0$. For each smaller cost $c^1<c<c^0$, there are three values of $\gamma^*$
consistent with $c$: $\gamma^*=0$; an unstable value of $\gamma^*$ at
the first intersection of the horizontal through $c$ with
$w(\gamma)$; and the Pareto optimal stable value of $\gamma^*$ at the
largest intersection of the horizontal with $w(\gamma)$.

As explained above, a network exhibits a positive critical mass if and
only if $\lim_{\gamma\to 0}w'(\gamma)>0$. Now by (\ref{eq:p'}), we
have
$\lim_{\gamma\to 0}w'(\gamma) = \lim_{\gamma\to 0}h'(\gamma)
-\frac{h(0)}{\lim_{\gamma\to 1}F'(\gamma)}$,
note that $h(0)=w(0)$ and the theorem follows easily.
\end{proof}

We finish this section by explaining the main difference
between our model and models with standard positive
externalities. 
Informally, in the model of \cite{ecohim95} for the FAX market, when a
new agent buys the good
(a FAX machine), he has a personal benefit and he also increases the
value of the network of FAX machines. This is a positive externality
which are felt only by the adopters of the good. 
Indeed, when this agent buys the good, this is a negative externality
on the agents who did not buy the good (see \cite{ks85}, Example A9 in \cite{segal}).
In our case, when an agent
chooses to invest in security, the externalities are always positive
and we have to distinguish between two positive externalities: one is
felt by the agents in state $S$ and the other is felt by the agent in state $N$. 
Indeed as $\gamma$ increases, both populations experience a decrease of their probability of loss but the value of this decrease is not the same in both populations.
We call the 'public externalities' the decrease felt by agents in state $N$ and it is given by $g(\gamma)=p(0,0)-p(0,\gamma)\geq 0$. We call the 'private externalities' the decrease felt only by agents in state $S$ and it is given by
$g(\gamma)+h(\gamma)=p(0,0)-p(1,\gamma)\geq g(\gamma)$.

First note that the notations are consistent. In particular, Equation
(\ref{eq:hy}) still gives the willingness to pay for self-protection
in a network with a fraction $\gamma^e$ of the agents in state $S$.
We are still dealing with positive externalities, however this does
not imply that $h'(.)>0$ (as it is the case in \cite{ecohim95}). Instead, positive externalities (i.e. the fact that both the public externalities $g(\gamma)$ and the private externalities $g(\gamma)+h(\gamma)$ are increasing in $\gamma$) only ensures that:
\begin{eqnarray}
\label{eq:mon}g'(.)\geq 0 \mbox{ and, } g'(.)+h'(.)\geq 0.
\end{eqnarray}
Assumption (\ref{eq:mon}) ensures the sensible fact that the more
agents invest in self-protection, the more secure the network becomes
(this is the total effect).
If in addition, $h'(.)\geq 0$, then adoption of security increases
others' incentive to invest (this is the marginal effect) and there
might be a critical mass effect. Recent works on the marginal effect
include Segal's increasing externalities \cite{segal} or
Topkis'supermodularity \cite{topbook}.
On the contrary when $h'(.)<0$, there is no
coordination problem (no critical mass). However, we show in the next
section that even in this case, the equilibrium is not socially
efficient. The intuition for this fact is that incentives are not
anymore aligned and since agent benefits from the
investment in security of the other agents, they prefers to
'free-ride' the investment of the other agents.

\subsection{Welfare Maximization}

A planner who maximizes social welfare can fully internalize the
network externalities and this is the situation we now consider.
We will show that there is always efficiency loss in our model with
exogenous price. In other words, the price of anarchy is always
greater than one. 
\begin{theorem}
Under Hypothesis \ref{hypoF} and \ref{hypow}, a social planner will choose a larger fraction $\gamma$ of agents
investing in self-protection than the market equilibrium for any fixed cost $c$.
\end{theorem}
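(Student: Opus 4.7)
The plan is to formalize the social planner's problem as minimizing total expected social cost, compute its first-order condition, and show that it differs from the market-equilibrium condition $w(\gamma)=c$ by a non-negative externality term that forces the planner's optimum $\gamma^{SP}$ to lie (weakly) above the market equilibrium $\gamma^{ME}$.

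First I would set up the planner's objective. By the monotone comparative statics of Theorem~\ref{th:gen}, for any target aggregate fraction $\gamma$ it is socially optimal to assign the top $\gamma$-fraction of types (those with the largest $\ell$) to invest, with threshold $\ell^* = F^{-1}(1-\gamma)$. The total social cost is
\begin{eqnarray*}
TC(\gamma) = c\gamma + \int_{\ell^*}^{1} \ell\, p(1,\gamma)\, dF(\ell) + \int_{0}^{\ell^*} \ell\, p(0,\gamma)\, dF(\ell).
\end{eqnarray*}
Differentiating in $\gamma$ with $d\ell^*/d\gamma = -1/F'(\ell^*)$, the boundary contributions from moving the threshold collapse to $-\ell^* h(\gamma) = -w(\gamma)$, since the marginal agent at $\ell^*$ switches from an expected loss of $\ell^* p(0,\gamma)$ to $\ell^* p(1,\gamma)$. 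The interior contributions assemble into the externality term
\begin{eqnarray*}
E(\gamma) = -\int_{\ell^*}^{1} \ell\, \frac{\partial p(1,\gamma)}{\partial \gamma}\, dF - \int_{0}^{\ell^*} \ell\, \frac{\partial p(0,\gamma)}{\partial \gamma}\, dF,
\end{eqnarray*}
producing the clean decomposition $dTC/d\gamma = c - w(\gamma) - E(\gamma)$.

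The key step is that $E(\gamma)\geq 0$: the positive externality assumption (\ref{eq:mon}) translates into $-\partial_\gamma p(0,\gamma) = g'(\gamma) \geq 0$ and $-\partial_\gamma p(1,\gamma) = g'(\gamma) + h'(\gamma) \geq 0$, so each summand of $E(\gamma)$ is non-negative. Evaluating the FOC at the Pareto-optimal market equilibrium $\gamma^{ME}$, where $w(\gamma^{ME})=c$, yields $dTC/d\gamma|_{\gamma^{ME}} = -E(\gamma^{ME}) \leq 0$, so the planner can reduce $TC$ by moving beyond $\gamma^{ME}$. Hypothesis~\ref{hypow} (single-peaked $w$) further ensures $w \geq c$ on the entire interval $[\gamma^u, \gamma^{ME}]$ between any unstable equilibrium and $\gamma^{ME}$, so $dTC/d\gamma \leq -E(\gamma) \leq 0$ uniformly on that interval, confirming $\gamma^{SP} \geq \gamma^{ME}$.

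The main obstacle is ruling out a global minimizer of $TC$ lying in the pre-critical-mass region $[0,\gamma^u)$, where $c-w(\gamma)>0$ can in principle dominate the externality correction $-E(\gamma)\leq 0$. In that regime, either the market equilibrium is itself at $\gamma^{ME}=0$ (and the conclusion is trivial), or one must compare $TC$-values directly along the interval, exploiting the strict positivity of $E$ and the monotonicity of $w$ on $[\gamma^u,\gamma^{ME}]$ to argue $TC(\gamma^{ME}) \leq TC(\gamma)$ for every $\gamma<\gamma^{ME}$ in the relevant range; this is the technical step requiring the most care.
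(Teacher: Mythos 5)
Your proposal is correct and follows essentially the same route as the paper: the paper writes the welfare $W(\gamma)=B(\gamma)-c\gamma$ (your $TC$ up to sign and an additive constant), computes $B'(\gamma)=w(\gamma)+\bigl(g'(\gamma)+h'(\gamma)\bigr)\int_0^\gamma F^{-1}(1-u)\,du+g'(\gamma)\int_\gamma^1F^{-1}(1-u)\,du$, and uses the positive-externality condition (\ref{eq:mon}) to conclude $W'(\gamma)\geq 0$ at any market equilibrium where $w(\gamma)=c$ --- exactly your decomposition $dTC/d\gamma=c-w(\gamma)-E(\gamma)$ with $E\geq 0$. Your closing worry about ruling out a global optimum below the equilibrium is legitimate, but the paper's own proof is no more detailed there (it stops at the local first-order argument and says ``the theorem follows''), so you are, if anything, slightly more careful.
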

We refer to \cite{sig08} for an estimate of this price of
anarchy for the epidemic risks model presented in previous section and to \cite{netecon08} for an extension to graphs with power-law degrees distribution.
\begin{proof}
The social welfare function is:
\begin{eqnarray*}
W(\gamma) &=& g(\gamma)\int_\gamma^1F^{-1}(1-u)du\\
&+& \left( g(\gamma)+h(\gamma)\right)\int_0^\gamma F^{-1}(1-u)du-c\gamma,
\end{eqnarray*}
where $g(\gamma)\int_\gamma^1F^{-1}(1-u)du$ is the gross benefit for
the fraction of agents in state $N$ and $\left(
  g(\gamma)+h(\gamma)\right)\int_0^\gamma F^{-1}(1-u)du$ for the
fraction of agents in state $S$ and $c\gamma$ are the costs.
We denote by $B(\gamma)$ the gross benefit for the
whole population so that $W(\gamma)=B(\gamma)-c\gamma$, then we
have:
\begin{eqnarray*}
B'(\gamma) &=& h(\gamma)F^{-1}(1-\gamma)\\
&=& +\left( h'(\gamma)+g'(\gamma)\right)\int_0^\gamma F^{-1}(1-u)du\\ &+&g'(\gamma)\int_\gamma^1F^{-1}(1-u)du.
\end{eqnarray*}
Recall that by (\ref{eq:eq}), the equilibria of the game (without the
social planner) are the values $\gamma$ such that $w(\gamma)=h(\gamma)F^{-1}(1-\gamma)=c$. In
particular for such a value of $\gamma$, since we assume positive externalities (\ref{eq:mon}), we have
that $B'(\gamma)\geq w(\gamma)=c$, hence $W'(\gamma)\geq 0$ and the theorem follows.
\end{proof}

\section{Conclusion}

In this paper, we study under which conditions agents in a large network invest in self-protection. We started our analysis with finding conditions when the amount of investment increases for a single agent as the vulnerability and loss increase. We also showed that risk-neutral agent do not invest more than 37\% of the expected loss under log-convex security breach probability functions.
We then extended our analysis to the case of interconnected agents of a large network using a simple epidemic risk models. We derived a sufficient condition on the security breach probability functions taking into consideration the global knowledge on the security of the entire network for guaranteeing increasing investment with increasing vulnerability.
It would be interesting to use other epidemics models as in \cite{leldiff} to see the impact on the results of this section.

Finally, we study a security game where agents anticipate the effect
of their actions on the security level of the network. We showed that
in all cases, the fulfilled equilibrium is not socially efficient. We
explained it by the separation of the network externalities in two
components: one public (felt by agents not investing) and the other
private (felt only by agents investing in self-protection). 
We also showed that alignment of incentives typically leads to a coordination problem.

In view of our results, it would be interesting to derive sufficient
conditions for non-alignment of the incentives as these conditions
would ensure that there is no coordination problem.
Exploring this issue is an interesting open problem. Another interesting direction of research concerns the information structure of such games. For example, in the case presented here of epidemic risk model, what is the impact of an error in the estimation of the contagion probability which could be for example over evaluated by the firm selling the security solution?
Also, in our work, the attacker is not a strategic player: attacks are made at random with probability of success depending of the security level of the agent targeted. However if the attacker can observe the security policies taken by the defenders, it can exploit this information \cite{worm}. An interesting extension would be to incorporate in our model such a strategic attacker as in \cite{goyal}. Another extension could also consider the supply side, i.e. the firms distributing the security solution in the population. Very basic cases have been studied \cite{lelsig09,BCGS} but again with a non strategic attacker.


\section*{Acknowledgements}
The author acknowledges the support of the French Agence Nationale de la Recherche (ANR) under reference ANR-11-JS02-005-01 (GAP project).



%

\end{document}